\newcommand{\be}{\begin{equation}}
\newcommand{\en}{\end{equation}}
\newtheorem{thm}{Theorem}
\newtheorem{cor}[thm]{Corollary}
\newtheorem{defi}{Definition}[section]
\newtheorem{lem}[defi]{Lemma}
\newtheorem{Theo}{Theorem}[section]
\newtheorem{Prop}[Theo]{Proposition}
\newtheorem{Cor}[Theo]{Corollary}
\newtheorem{remark}[Theo]{Remark}
\newcommand{\bedefin}{\begin{defi}}
\newcommand{\findefi}{\end{defi} \medskip}
\newcommand{\betheo}{\begin{theorem}$\!\!${\bf \,\,\,}}
\newcommand{\entheo}{\end{theorem}}
\newcommand{\enth}{\end{theorem}}
\newcommand{\becor}{\begin{cor}$\!\!${\bf .}}
\newcommand{\encor}{\end{cor}}
\newcommand{\belem}{\begin{lem}$\!\!${\bf }}
\newcommand{\enlem}{\end{lem}}
\newcommand{\bea}{\begin{eqnarray}}
\newcommand{\ena}{\end{eqnarray}}
\newcommand{\beano}{\begin{eqnarray*}}
\newcommand{\enano}{\end{eqnarray*}}
\newcommand{\bee}{\begin{enumerate}}
\newcommand{\ene}{\end{enumerate}}
\newcommand{\bei}{\begin{itemize}}
\newcommand{\eni}{\end{itemize}}
\newcommand{\betab}{\begin{tabular}}
\newcommand{\entab}{\end{tabular}}
\newcommand{\bd}{\begin{displaymath}}
\newcommand{\g}{G_{\hbox{\tiny{NC}}}}
\newcommand{\G}{\mathfrak{g}_{\hbox{\tiny{NC}}}}
\newcommand{\gh}{G_{\hbox{\tiny{H}}}}
\newcommand{\Gh}{\mathfrak{g}_{\hbox{\tiny{H}}}}
\begin{document}

\title{Deformation of Noncommutative Quantum Mechanics}
\author[1]{Jian-Jian Jiang\thanks{j.j.jiang@foxmail.com}}
\author[2]{S. Hasibul Hassan Chowdhury\thanks{shhchowdhury@gmail.com}}
\affil[1,2]{Chern Institute of Mathematics, Nankai University, Tianjin 300071, P. R. China}
\date{}

\maketitle

\begin{abstract}
In this paper, the Lie group $\g^{\alpha,\beta,\gamma}$, of which the kinematical symmetry group $\g$ of noncommutative quantum mechanics (NCQM) is a special case due to fixed nonzero $\alpha$, $\beta$ and $\gamma$, is three-parameter deformation quantized using the method suggested by Ballesteros and Musso in \cite{BaMu}. A certain family of QUE algebras, corresponding to $\g^{\alpha,\beta,\gamma}$ with two of the deformation parameters approaching zero, is found to be in agreement with the existing results of the literature on quantum Heisenberg group. Finally, we dualize the underlying QUE algebra to obtain an expression for the underlying $*$-product between smooth functions on $\g^{\alpha,\beta,\gamma}$.
\end{abstract}

\section{Introduction}\label{sec:intro}
Noncommutative quantum mechanics (NCQM) is a vibrant field of research these days. In addition to the canonical position-momentum noncommutativity, it also demands for the noncommutativity between the two position coordinates and the two momenta coordinates respectively for a system of two degrees of freedom. The noncommutativity of the position coordinates was first proposed by H. S. Snyder \cite{Sny} in his quest for the quantized nature of space-time. Such a model of space-time in scales as small as Planck length is also proposed, among others by Doplicher et al. \cite{Doplicheretal} in order to avoid creation of microscopic black holes to the effect of losing the operational meaning of localization in space-time. The noncommutativity of the momenta coordinates, on the other hand, emerges if one introduces a constant background magnetic field to the underlying system of two degrees of freedom.

The defining group of a two-dimensional quantum mechanical system is the well-known Heisenberg group (denoted by $\gh$ in the sequel). What runs parallel to $\gh$ in two dimensions, is the triply extended group of translations of $\mathbb{R}^{4}$ (denoted as $\g$ in the sequel) in the context of NCQM. This nilpotent Lie group was first introduced in \cite{ncqmjmp} and was later studied extensively in terms of its unitary dual in \cite{ncqmjpa}. It was also shown in \cite{ncqmjpa} that the unitary dual of $\gh$ sits inside that of $\g$ pointing up the universal nature of $\g$ as the underlying group of quantum mechanics. This Lie group was later identified as the kinematical symmetry group for this model of NCQM in \cite{wigjmp} by computing its associated Wigner functions which under appropriate limit agrees with the quantum mechanical Wigner function originally computed by Wigner in his seminal paper \cite{wig}.

The well-established theory of deformation quantization insinuates the fact that the structure of a Lie group is not  quite a rigid one. Here one equips the enveloping algebra of the underlying Lie algebra with some additional structures in a compatible way to make it a Hopf algebra which is noncommutative (inherited from the Lie algebra structure) but co-commutative. For details on the Hopf algebraic terminologies, we refer the readers to the classic reading \cite{ChPr}. The goal of deformation quantization is to deform the coalgebra structure of the universal enveloping algebra considered as a Hopf algebra. The resulting deformed object known as quantized universal enveloping algebra (abbreviated as QUE algebra) still lies in the category of Hopf algebra. ``Quantum Group'' is a widely used misnomer in modern mathematical literature for QUE algebra resulting from the deformation of the universal enveloping algebra of some Lie algebra.

Quantum Heisenberg group in one and higher dimensions have been studied thoroughly in the past (see, for example, \cite{CeGiSoTa, Bonechietal, Hussinetal}). Deformation quantization of NCQM, on the other hand, had been studied recently by Bastos et al. in a series of papers by focussing on the computation of various $*$-products between functions lying in $\mathcal{S}(\mathbb{R}^{2n})$ (see, for example, \cite{bastosjmp, bastoscmp, Diasetal}). Here in this paper, following a method prescribed by Ballesteros et al. (see \cite{BaMu}), we take a different route to find a three-parameter deformation of the universal enveloping algebra associated with a family of Lie algebras denoted by $\G^{\alpha,\beta,\gamma}$ of which $\G$ (Lie algebra of $\g$) is a special case (see below with fixed nonzero $\alpha$, $\beta$ and $\gamma$). We verify that under classical limits of two of the deformation parameters, one of the families $\G^{\alpha,\beta,\gamma}$ due to $\beta=\gamma=0$ essentially reproduces the Hopf algebra structure associated with the Heisenberg group found by Celeghini et al. in \cite{CeGiSoTa}. We also find the $*$-products between smooth functions on $\g^{\alpha,\beta,\gamma}$ (Lie group corresponding to the Lie algebra $\G^{\alpha,\beta,\gamma}$) by dualising the underlying QUE algebra  and discuss the linear Poisson structures arising in the classical limit, i.e. when the deformation parameters approach zero. We verify that such linear Poisson structure indeed agrees with the underlying Lie bialgebra structure. 

The organisation of the paper is as follows. In section \ref{sec:deform}, following a brief description of the Lie group $\g^{\alpha,\beta,\gamma}$ to be studied, we find the QUE algebra associated with the underlying enveloping algebra and study various limiting cases to compare our results against the existing ones for the Heisenberg case. We also, compute the cocommutators associated with the generators from the explicit expressions of the coproducts of the underlying QUE algebra to find the corresponding bialgebra structure. In section \ref{sec:star-prod}, we compute the $*$-products between smooth functions on $\g$ by considering the dual QUE algebra and see how they give rise to linear Poisson structure that is in complete agreement with the bialgebra structure computed in section \ref{sec:deform}. Finally, in section \ref{sec:conclusions}, we give our concluding remarks and point to some possible future work.

\section{Deformation of $U(\G^{\alpha,\beta,\gamma})$}\label{sec:deform}
Let us first have a cursory look at the algebraic structure associated with the group $\g^{\alpha,\beta,\gamma}$. For any real numbers $\alpha,\beta,\gamma\in\mathbb{R}$, let $\g^{\alpha,\beta,\gamma}$ be a Lie group whose generic element is denoted by $(\theta,\phi,\psi,\mathbf{q},\mathbf{p})$, where $\mathbf{q},\mathbf{p}\in\mathbb{R}^2$. Then the group composition law for $\g^{\alpha,\beta,\gamma}$ reads (cf. \cite{ncqmjpa})
\begin{eqnarray*}
&(\theta,\phi,\psi,\mathbf{q},\mathbf{p})(\theta',\phi',\psi',\mathbf{q}',\mathbf{p}')= \qquad\qquad \\
&(\theta+\theta'+\frac{\alpha}{2}[\langle\mathbf{q},\mathbf{p}'\rangle-\langle\mathbf{p},\mathbf{q}'\rangle],\phi+\phi'+\frac{\beta}{2}[\mathbf{p}\wedge\mathbf{p}'],\psi+\psi'+\frac{\gamma}{2}[\mathbf{q}\wedge\mathbf{q}'],\mathbf{q}+\mathbf{q}',\mathbf{p}+\mathbf{p}').
\end{eqnarray*}
If $\mathbf{q}=(q_1,q_2)$, $\mathbf{p}=(p_1,p_2)$, then $\langle,\rangle$ and $\wedge$ are defined as $\langle\mathbf{q},\mathbf{p}\rangle=q_1 p_1+q_2 p_2$ and $\mathbf{q}\wedge\mathbf{p}=q_1 p_2-q_2 p_1$ respectively.

Note that if we denote the dimension of the position coordinate by $[q]$ and that of the momentum coordinate by $[p]$, then we immediately see that in order to have $\theta$, $\phi$ and $\psi$ to be all dimensionless, we must have 
$$[\alpha]=[\frac{1}{pq}], \qquad [\beta]=[\frac{1}{p^2}], \qquad [\gamma]=[\frac{1}{q^2}].$$
Let us denote the Lie algebra of $\g^{\alpha,\beta,\gamma}$ by $\G^{\alpha,\beta,\gamma}$. Since we are only intersted in the $\alpha\ne 0$ case, we may assume $\alpha\ne 0$ throughout this paper. Under this assumption, we can choose an appropriate basis of $\G^{\alpha,\beta,\gamma}$, namely, $\{\Theta, \Phi, \Psi, Q_1, Q_2, P_1, P_2\}$, such that the commutation relations between them have the following dimension-consistent form:
\begin{eqnarray*}
& [Q_i,P_j]=\frac{\delta_{ij}}{\alpha}\Theta, \quad [Q_1,Q_2]=\frac{\beta}{\alpha^2}\Phi, \quad [P_1,P_2]=\frac{\gamma}{\alpha^2}\Psi, \\
& [\Theta, \Phi]=[\Theta, \Psi]=[\Theta, Q_i]=[\Theta, P_i]=[\Phi, \Psi]=0, \\
& [\Phi, Q_i]=[\Phi, P_i]=[\Psi, Q_i]=[\Psi, P_i]=0, \quad (i,j=1,2),
\end{eqnarray*}
where $\delta_{ii}=1$ and $\delta_{ij}=0$ for $i\ne j$ is the Kronecker symbol.

Using the method suggested by A. Ballesteros and F. Musso \cite{BaMu}, we obtain a three-parameter deformation of the universal enveloping algebra $U(\G^{\alpha,\beta,\gamma})$, as stated in the next proposition. Note that our deformation is in the sense of ``formal deformation of Hopf algebras'', as is commonly adopted in the literature of ``Quantum Groups''. We refer the reader to the monograph \cite{ChPr} for a thorough introduction.

Throughout the paper we denote by $\mathbb{C}$ the set of complex numbers, and denote by $\mathbb{C}[[\hbar_1,\hbar_2,\hbar_3]]$ the commutative ring of formal power series in $\hbar_1$, $\hbar_2$ and $\hbar_3$ with complex coefficients. Then any associative algebra $U$ over $\mathbb C[[\hbar_1,\hbar_2,\hbar_3]]$ can be equipped with the $(\hbar_1,\hbar_2,\hbar_3)$-adic topology (see, e.g. \cite{ChPr}). Henceforth we may consider $U(\G^{\alpha,\beta,\gamma})$ as a Hopf algebra over $\mathbb{C}$.

\begin{Prop} Let $U_{\hbar_1,\hbar_2,\hbar_3}(\G^{\alpha,\beta,\gamma})$ be the algebra over $\mathbb C[[\hbar_1,\hbar_2,\hbar_3]]$ topologically generated by elements $\Theta,\Phi,\Psi,Q_1,Q_2,P_1,P_2$ and with the following defining relations:
\begin{eqnarray*}
&\left(\rho=\hbar_1 \Theta+\hbar_2 \Phi+\hbar_3 \Psi, \quad \lambda=\frac{\sinh{2\rho}}{2\rho}=\frac{e^{2\rho}-e^{-2\rho}}{4\rho}=1+\frac{2}{3}\rho^2+\cdots\right) \\
& [Q_i,P_j]=\delta_{ij}\frac{\lambda}{\alpha}\Theta, \qquad [Q_1,Q_2]=\frac{\beta\lambda}{\alpha^2}\Phi, \qquad [P_1,P_2]=\frac{\gamma\lambda}{\alpha^2}\Psi, \\
& [\Theta, \Phi]=[\Theta, \Psi]=[\Theta, Q_i]=[\Theta, P_i]=[\Phi, \Psi]=0, \\
& [\Phi, Q_i]=[\Phi, P_i]=[\Psi, Q_i]=[\Psi, P_i]=0, \qquad (i,j=1,2).
\end{eqnarray*}
Then $U_{\hbar_1,\hbar_2,\hbar_3}(\G^{\alpha,\beta,\gamma})$ is a topological Hopf algebra over $\mathbb C[[\hbar_1,\hbar_2,\hbar_3]]$ with coproduct defined by
\begin{eqnarray*}
&\Delta(\rho)=\rho\otimes 1+1\otimes\rho, \qquad \Delta(e^\rho)=e^\rho\otimes e^\rho, \\
&\Delta(\Theta)=\frac{\lambda\Theta\otimes e^{2\rho}+e^{-2\rho}\otimes\lambda\Theta}{\Delta(\lambda)}, \quad \Delta(\Phi)=\frac{\lambda\Phi\otimes e^{2\rho}+e^{-2\rho}\otimes\lambda\Phi}{\Delta(\lambda)}, \quad \Delta(\Psi)=\frac{\lambda\Psi\otimes e^{2\rho}+e^{-2\rho}\otimes\lambda\Psi}{\Delta(\lambda)}, \\
&\Delta(Q_i)=Q_i\otimes e^{\rho}+e^{-\rho}\otimes Q_i, \qquad \Delta(P_i)=P_i\otimes e^{\rho}+e^{-\rho}\otimes P_i, \qquad (i=1,2),
\end{eqnarray*}
counit defined by
$$\epsilon(\Theta)=\epsilon(\Phi)=\epsilon(\Psi)=\epsilon(Q_1)=\epsilon(Q_2)=\epsilon(P_1)=\epsilon(P_2)=0,$$
and antipode defined by
\begin{eqnarray*}
&S(\Theta)=-\Theta, ~~ S(\Phi)=-\Phi, ~~ S(\Psi)=-\Psi, ~~ S(Q_i)=-Q_i, ~~ S(P_i)=-P_i, ~~ (i=1,2).
\end{eqnarray*}
Moreover, $U_{\hbar_1,\hbar_2,\hbar_3}(\G^{\alpha,\beta,\gamma})$ is a three-parameter quantization of $\G^{\alpha,\beta,\gamma}$.
\end{Prop}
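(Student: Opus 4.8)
The plan is to verify the three separate claims packaged into the proposition in turn: (i) that the stated coproduct, counit, and antipode, together with the deformed commutation relations, really do assemble into a topological Hopf algebra over $\mathbb C[[\hbar_1,\hbar_2,\hbar_3]]$; (ii) that this Hopf algebra deforms $U(\G^{\alpha,\beta,\gamma})$, i.e.\ that setting $\hbar_1=\hbar_2=\hbar_3=0$ recovers the undeformed enveloping algebra with its cocommutative coproduct; and (iii) that the deformation is genuine in the ``quantization'' sense, namely that $U_{\hbar_1,\hbar_2,\hbar_3}(\G^{\alpha,\beta,\gamma})$ is topologically free as a $\mathbb C[[\hbar_1,\hbar_2,\hbar_3]]$-module and that $U_{\hbar_1,\hbar_2,\hbar_3}/(\hbar_1,\hbar_2,\hbar_3)U_{\hbar_1,\hbar_2,\hbar_3}\cong U(\G^{\alpha,\beta,\gamma})$ as Hopf algebras.

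\medskip
\textbf{Step 1 (Algebra structure and well-definedness of $\lambda$).} First I would observe that $\rho=\hbar_1\Theta+\hbar_2\Phi+\hbar_3\Psi$ is central (since $\Theta,\Phi,\Psi$ are central and mutually commuting even in the deformed algebra), so $e^{\pm\rho}$, $e^{\pm 2\rho}$ and $\lambda=\sinh(2\rho)/(2\rho)$ are well-defined central elements of the $(\hbar_1,\hbar_2,\hbar_3)$-adic completion, with $\lambda=1+\tfrac23\rho^2+\cdots$ invertible. Because $\lambda$ is central and invertible, the right-hand sides of the deformed brackets lie in the algebra, and a PBW-type argument (ordering monomials in $Q_1,Q_2,P_1,P_2,\Theta,\Phi,\Psi$) shows the deformed relations are consistent, giving a topologically free $\mathbb C[[\hbar_1,\hbar_2,\hbar_3]]$-module with the expected basis; this simultaneously settles the freeness half of (iii).

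\medskip
\textbf{Step 2 (Coalgebra axioms).} Next I would check that $\Delta$ is coassociative and an algebra homomorphism, that $\epsilon$ is a counit, and that $S$ is an antipode. Coassociativity of $\Delta(\rho)$ and $\Delta(e^\rho)$ is immediate (primitive/group-like); coassociativity of $\Delta(\Theta),\Delta(\Phi),\Delta(\Psi)$ follows because $\Delta(\lambda)$ is computed directly from $\Delta(e^{\pm2\rho})=e^{\pm2\rho}\otimes e^{\pm2\rho}$ and $\Delta(\rho)=\rho\otimes1+1\otimes\rho$, and the quotient $(\lambda\,\Theta\otimes e^{2\rho}+e^{-2\rho}\otimes\lambda\,\Theta)/\Delta(\lambda)$ is really the ``symmetric'' combination forced by requiring $\Delta(\lambda\Theta/\ldots)$ to factor the way $\lambda$ does; one verifies $(\Delta\otimes\mathrm{id})\Delta=(\mathrm{id}\otimes\Delta)\Delta$ on each generator by substituting these group-like expressions. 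The homomorphism property $\Delta([X,Y])=[\Delta(X),\Delta(Y)]$ must be checked on the three nontrivial brackets $[Q_i,P_j]$, $[Q_1,Q_2]$, $[P_1,P_2]$: this is where the precise exponents $e^{\pm\rho}$ in $\Delta(Q_i),\Delta(P_i)$ and the factor $\lambda$ in $[Q_i,P_j]$ conspire, and it is the computational heart of the proof (the choice $\lambda=\sinh(2\rho)/2\rho$ is exactly what makes it close — cross terms produce $\lambda\otimes e^{2\rho}$ and $e^{-2\rho}\otimes\lambda$ pieces that reassemble into $\Delta(\lambda\Theta)$). The counit and antipode identities are then routine verifications on generators.

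\medskip
\textbf{Step 3 (Classical limit).} Finally, reducing mod $(\hbar_1,\hbar_2,\hbar_3)$ sends $\rho\mapsto0$, hence $\lambda\mapsto1$ and $e^{\pm\rho}\mapsto1$, so the deformed brackets become the original ones, $\Delta(\Theta)\mapsto\Theta\otimes1+1\otimes\Theta$ and likewise for every generator, recovering the cocommutative Hopf algebra $U(\G^{\alpha,\beta,\gamma})$; combined with topological freeness from Step 1 this is precisely the statement that $U_{\hbar_1,\hbar_2,\hbar_3}(\G^{\alpha,\beta,\gamma})$ is a three-parameter quantization of $\G^{\alpha,\beta,\gamma}$. \textbf{The main obstacle} I anticipate is Step 2's homomorphism check: keeping track of the noncommuting tensor factors $e^{\pm\rho}$ against the central-but-$\hbar$-dependent $\lambda$ while verifying $[\Delta(Q_1),\Delta(P_1)]=\Delta\!\big(\tfrac{\lambda}{\alpha}\Theta\big)$ and its analogues is delicate bookkeeping, and one must confirm that the identity $\Delta(\lambda)=\tfrac{1}{2\rho\otimes1+1\otimes2\rho}\big(e^{2\rho}\otimes e^{2\rho}-e^{-2\rho}\otimes e^{-2\rho}\big)$ really equals the expression against which the $\Delta(\Theta)$ formula is normalized — everything else is either standard Hopf-algebra axiom-chasing or a direct PBW argument.
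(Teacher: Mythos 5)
Your proposal is correct and follows essentially the same route as the paper: verify the Hopf-algebra axioms generator by generator (with the homomorphism check $[\Delta(Q_1),\Delta(P_1)]=\Delta(\tfrac{\lambda}{\alpha}\Theta)$ and the consistency of $\Delta(\rho)$ with $\Delta(\Theta),\Delta(\Phi),\Delta(\Psi)$ as the only nontrivial computations), then pass to the classical limit mod $(\hbar_1,\hbar_2,\hbar_3)$. The only organizational difference is that you fold topological freeness into Step 1 via a direct PBW argument, whereas the paper defers it to a separate proposition, exhibiting the rescaling automorphism $X\mapsto X/\lambda$ that identifies the deformed algebra with $U(\G^{\alpha,\beta,\gamma})[[\hbar_1,\hbar_2,\hbar_3]]$.
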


\begin{proof}
(\textbf{$\Delta$ is a well-defined algebra homomorphism}) Note that $\rho, \lambda$ are central elements, we have $[Q_1\otimes e^\rho,e^{-\rho}\otimes P_1]=0$ and $$[Q_1\otimes e^\rho,P_1\otimes e^\rho]=Q_1 P_1\otimes e^{2\rho}-P_1 Q_1\otimes e^{2\rho}=[Q_1,P_1]\otimes e^{2\rho}=\frac{\lambda}{\alpha}\Theta\otimes e^{2\rho}.$$
From this we get
\begin{align*}
[\Delta(Q_1),\Delta(P_1)]&=[Q_1\otimes e^{\rho}+e^{-\rho}\otimes Q_1, P_1\otimes e^{\rho}+e^{-\rho}\otimes P_1] \\
&={1\over\alpha}(\lambda\Theta\otimes e^{2\rho}+e^{-2\rho}\otimes \lambda\Theta)=\Delta(\frac{\lambda}{\alpha}\Theta)=\Delta([Q_1,P_1]).
\end{align*}
Similarly, we can prove that $[\Delta(Q_2),\Delta(P_2)]=\Delta([Q_2,P_2])$, $[\Delta(Q_1),\Delta(Q_2)]=\Delta([Q_1,Q_2])$ and $[\Delta(P_1),\Delta(P_2)]=\Delta([P_1,P_2])$. Furthermore, since $\Delta(\Theta)$ is a central element, we have $$[\Delta(\Theta),\Delta(\Phi)]=\Delta([\Theta,\Phi])=[\Delta(\Theta),\Delta(P_1)]=\Delta([\Theta,P_1])=0.$$
We also get
$$[\Delta(P_1),\Delta(Q_2)]=[P_1\otimes e^{\rho}+e^{-\rho}\otimes P_1, Q_2\otimes e^{\rho}+e^{-\rho}\otimes Q_2]=0=\Delta([P_1,Q_2]).$$
Thus all typical defining relations of algebra are checked to be preserved by $\Delta$, hence the other defining relations are also preserved. We also have to check the consistency of the definitions of $\Delta(\Theta)$, $\Delta(\Phi)$, $\Delta(\Psi)$ and $\Delta(\rho)$. Indeed,
\begin{align*}
\Delta(\hbar_1 \Theta+\hbar_2 \Phi+\hbar_3 \Psi)&=\frac{\rho\lambda\otimes e^{2\rho}+e^{-2\rho}\otimes\rho\lambda}{\Delta(\lambda)} \\
&=\frac{(e^{2\rho}-e^{-2\rho})\otimes e^{2\rho}+e^{-2\rho}\otimes(e^{2\rho}-e^{-2\rho})}{\Delta(4\lambda)}=\Delta(\rho).
\end{align*}
Thus $\Delta$ is actually a well-defined algebra homomorphsim.

(\textbf{$\Delta$ is coassociative}) Note that $\Delta(\lambda)=\frac{e^{2\rho}\otimes e^{2\rho}-e^{-2\rho}\otimes e^{-2\rho}}{4\Delta(\rho)}$, we have $$(1\otimes\Delta)\Delta(\lambda)=\frac{e^{2\rho}\otimes e^{2\rho}\otimes e^{2\rho}-e^{-2\rho}\otimes e^{-2\rho}\otimes e^{-2\rho}}{4(\rho\otimes 1\otimes 1+1\otimes\rho\otimes 1+1\otimes 1\otimes\rho)}=(\Delta\otimes 1)\Delta(\lambda).$$
Therefore, to prove $(1\otimes\Delta)\Delta(\Theta)=(\Delta\otimes 1)\Delta(\Theta)$, we only need to check
$$(1\otimes\Delta)(\lambda\Theta\otimes e^{2\rho}+e^{-2\rho}\otimes \lambda\Theta)=(\Delta\otimes 1)(\lambda\Theta\otimes e^{2\rho}+e^{-2\rho}\otimes \lambda\Theta).$$
The two sides of the above equality are all easily seen to be $$\lambda\Theta\otimes e^{2\rho}\otimes e^{2\rho}+e^{-2\rho}\otimes \lambda\Theta\otimes e^{2\rho}+e^{-2\rho}\otimes\otimes e^{-2\rho}\otimes \lambda\Theta.$$
This verifies the coassociativity of $\Delta$ on $\Theta$. The coassociativity of $\Delta$ on the other generators can be verified similarly.

(\textbf{$\epsilon$ is a well-defined algebra homomorphism}) This is obvious.

(\textbf{$\epsilon$ is truely a counit}) By simple computation, we see $(1\otimes\epsilon)\Delta(\Theta)=\frac{\lambda\Theta}{\lambda}=\Theta$. Such relation for the other generators are verified by similar computation.

(\textbf{$S$ is a well-defined algebra anti-homomorphism}) Note that $S(\rho)=-\rho$ and $S(e^\rho)=e^{-\rho}$, we have $S(\lambda)=\lambda$. Let us check that $$[S(Q_1),S(P_1)]=[-Q_1,-P_1]={\lambda\over\alpha}\Theta=S([P_1,Q_1]).$$ The other defining relations are checked similarly.

(\textbf{The structure maps are compatible}) We want to prove that for each generator $X\in\{\Theta,\Phi,\Psi,Q_1,Q_2,P_1,P_2\}$, we get
$\mu(S\otimes 1)\Delta(X)=\mu(1\otimes S)\Delta(X)=\epsilon(X)=0$, where $\mu$ is the multiplication map from the algebra structure. Due to the similarity between generators, the invertibility of $\lambda$, and the ``almost symmetric'' nature of $\Delta$, we only prove the equalities $\mu(S\otimes 1)\Delta(\lambda\Theta)=\mu(S\otimes 1)\Delta(P_1)=0$. In fact, 
\begin{eqnarray*}
&\mu(S\otimes 1)\Delta(\lambda\Theta)=\mu(S\otimes 1)(\lambda\Theta\otimes e^{2\rho}+e^{-2\rho}\otimes \lambda\Theta)=-\lambda\Theta e^{2\rho}+e^{2\rho}\lambda\Theta=0, \\
&\mu(S\otimes 1)\Delta(P_1)=\mu(S\otimes 1)(P_1\otimes e^{\rho}+e^{-\rho}\otimes P_1)=-P_1 e^\rho+e^\rho P_1=0.
\end{eqnarray*}
This proves the compatibility of structure maps.

(\textbf{Classical limit coincides with $\G^{\alpha,\beta,\gamma}$}) Note that $\lambda\equiv 1 ~~ (\mathrm{mod}\: \hbar_1,\hbar_2,\hbar_3)$, we get
$$[Q_1,P_1]=[Q_2,P_2]\equiv{1\over\alpha}\Theta, \quad [Q_1,Q_2]\equiv\frac{\beta}{\alpha^2}\Phi, \quad [P_1,P_2]\equiv\frac{\gamma}{\alpha^2}\Psi, \quad (\mathrm{mod}\: \hbar_1,\hbar_2,\hbar_3)$$
which are the defining relations of $\G^{\alpha,\beta,\gamma}$. We also see that in classical limit, each generator is primitive, e.g., $\Delta(\Theta)\equiv \Theta\otimes 1+1\otimes \Theta$. Therefore, if we denote $U=U_{\hbar_1,\hbar_2,\hbar_3}(\G^{\alpha,\beta,\gamma})$, then $U/(\hbar_1 U+\hbar_2 U+\hbar_3 U)\cong U(\G^{\alpha,\beta,\gamma})$ as Hopf algebras.
\end{proof}

\begin{remark}
Note that $\G^{1,0,0}\cong\mathbb{C}^2\oplus\Gh$ is the trivial two-dimensional central extension of the Heisenberg algebra $\Gh$ in two dimensions. Taking $\hbar_2,\hbar_3\to 0$, we find that $U_{\hbar,0,0}(\G^{1,0,0})$ is defined as follows:
\begin{eqnarray*}
&[Q_i,P_j]=\delta_{ij}\frac{\sinh{(2\hbar \Theta)}}{2\hbar}, \qquad \Delta(\Theta)=\Theta\otimes 1+1\otimes \Theta, \\
&\Delta(Q_i)=Q_i\otimes e^{\hbar \Theta}+e^{-\hbar \Theta}\otimes Q_i, \quad \Delta(P_i)=P_i\otimes e^{\hbar \Theta}+e^{-\hbar \Theta}\otimes P_i, \quad (i=1,2).
\end{eqnarray*}
This is a quantization of Hisenberg algebra defined by E. Celeghini, et al. (see \cite{CeGiSoTa}).

Also note that in the Heisenberg algebra case, there exist coboundary but non-quasi-triangular Lie bialgebra structures; and if we would like to extend the Heisenberg algebra by a derivation, then there is a quasitriangular Lie bialgebra structure, and the above quantization provids an universal R-matrix, which can be obtained by a contraction on an R-matrix of the Drinfeld-Jimbo type quantum group (see \cite{CeGiSoTa}).

However, when $\alpha,\beta,\gamma \ne 0$, we checked by the mathematical software \textsf{Mathematica} that there is no nontrivial coboundary Lie bialgebra structure on $\G^{\alpha,\beta,\gamma}$. Thus there is no universal R-matrix for generic $\G^{\alpha,\beta,\gamma}$ in the ordinary sense.
\end{remark}

\begin{Prop}
The Hopf algebra $U_{\hbar_1,\hbar_2,\hbar_3}(\G^{\alpha,\beta,\gamma})$ is a flat deformation of $U(\G^{\alpha,\beta,\gamma})$. Namely, $U_{\hbar_1,\hbar_2,\hbar_3}(\G^{\alpha,\beta,\gamma})\cong U(\G^{\alpha,\beta,\gamma})[[\hbar_1,\hbar_2,\hbar_3]]$ as associative algebras, hence only the coalgebra structure of $U(\G^{\alpha,\beta,\gamma})$ is deformed. Here we use $U(\G^{\alpha,\beta,\gamma})[[\hbar_1,\hbar_2,\hbar_3]]$ to denote the trivial deformation of $U(\G^{\alpha,\beta,\gamma})$.
\end{Prop}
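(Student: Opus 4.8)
The plan is to exhibit an explicit $\mathbb{C}[[\hbar_1,\hbar_2,\hbar_3]]$-algebra isomorphism between $U_{\hbar_1,\hbar_2,\hbar_3}(\G^{\alpha,\beta,\gamma})$ and $U(\G^{\alpha,\beta,\gamma})[[\hbar_1,\hbar_2,\hbar_3]]$ by a suitable rescaling of generators. The key observation is that $\rho=\hbar_1\Theta+\hbar_2\Phi+\hbar_3\Psi$ is central and $\equiv 0 \pmod{\hbar_1,\hbar_2,\hbar_3}$, hence $\lambda=\sinh(2\rho)/2\rho$ is a central element of the form $1+(\text{higher order})$, and therefore a unit in the $(\hbar_1,\hbar_2,\hbar_3)$-adic topology with a well-defined central square root $\lambda^{1/2}=1+\tfrac13\rho^2+\cdots$. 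First I would set $\widetilde{Q}_i=\lambda^{-1/2}Q_i$, $\widetilde{P}_i=\lambda^{-1/2}P_i$, and keep $\widetilde{\Theta}=\Theta$, $\widetilde{\Phi}=\Phi$, $\widetilde{\Psi}=\Psi$. Since $\lambda$ is central, one computes $[\widetilde{Q}_i,\widetilde{P}_j]=\lambda^{-1}[Q_i,P_j]=\lambda^{-1}\delta_{ij}\tfrac{\lambda}{\alpha}\Theta=\tfrac{\delta_{ij}}{\alpha}\widetilde{\Theta}$, and similarly $[\widetilde{Q}_1,\widetilde{Q}_2]=\tfrac{\beta}{\alpha^2}\widetilde{\Phi}$, $[\widetilde{P}_1,\widetilde{P}_2]=\tfrac{\gamma}{\alpha^2}\widetilde{\Psi}$, while all the other brackets remain zero. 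These are exactly the undeformed relations of $\G^{\alpha,\beta,\gamma}$, so the tilded generators satisfy the relations of $U(\G^{\alpha,\beta,\gamma})[[\hbar_1,\hbar_2,\hbar_3]]$.

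Next I would argue that this assignment extends to a topological algebra isomorphism. Define $\Xi\colon U(\G^{\alpha,\beta,\gamma})[[\hbar_1,\hbar_2,\hbar_3]]\to U_{\hbar_1,\hbar_2,\hbar_3}(\G^{\alpha,\beta,\gamma})$ on generators by sending the abstract $\Theta,\Phi,\Psi,Q_i,P_i$ to $\widetilde\Theta,\widetilde\Phi,\widetilde\Psi,\widetilde Q_i,\widetilde P_i$; by the previous paragraph $\Xi$ respects the defining relations, so it is a well-defined continuous algebra homomorphism. An inverse is given by $\widetilde Q_i\mapsto \lambda^{1/2}Q_i$ etc., which is well-defined for the same reason (the relations of the deformed algebra are checked to be preserved under $X\mapsto\lambda^{1/2}X$). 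Since $\Xi$ reduces mod $(\hbar_1,\hbar_2,\hbar_3)$ to the identity on $U(\G^{\alpha,\beta,\gamma})$, and both sides are separated and complete in the adic topology, a standard successive-approximation argument (or simply exhibiting the explicit inverse above) shows $\Xi$ is bijective. Hence $U_{\hbar_1,\hbar_2,\hbar_3}(\G^{\alpha,\beta,\gamma})\cong U(\G^{\alpha,\beta,\gamma})[[\hbar_1,\hbar_2,\hbar_3]]$ as associative $\mathbb{C}[[\hbar_1,\hbar_2,\hbar_3]]$-algebras, and only the coalgebra structure is deformed; flatness of the deformation follows immediately since the right-hand side is a free $\mathbb{C}[[\hbar_1,\hbar_2,\hbar_3]]$-module.

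One technical point to address carefully is that the PBW-type basis of $U_{\hbar_1,\hbar_2,\hbar_3}(\G^{\alpha,\beta,\gamma})$ as a topological $\mathbb{C}[[\hbar_1,\hbar_2,\hbar_3]]$-module is what it should be, i.e. that ordered monomials in $\Theta,\Phi,\Psi,Q_i,P_i$ form a topological basis; this guarantees that $\Xi$ really is an isomorphism of modules and not merely a surjection with a left inverse. This is where I expect the main (mild) obstacle to lie: one must invoke a Poincar\'e--Birkhoff--Witt theorem for the algebra presented by the deformed relations, for instance via Bergman's diamond lemma applied to the rewriting rules coming from the relations $[Q_i,P_j]=\delta_{ij}\tfrac{\lambda}{\alpha}\Theta$ etc. (all overlap ambiguities are resolvable because $\Theta,\Phi,\Psi$ are central and $\lambda$ is a central power series in them), or simply by transporting the PBW basis through the explicit change of variables $X\mapsto\lambda^{1/2}X$. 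Once the PBW property is in hand, everything else is the routine verification sketched above.
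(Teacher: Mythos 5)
Your proof is correct and follows essentially the same route as the paper: both absorb the central unit $\lambda$ into the generators so that the deformed relations become the undeformed ones (the paper divides all seven generators by $\lambda$, whereas you divide only $Q_i,P_i$ by $\lambda^{1/2}$, which has the minor advantage of leaving $\rho$ and $\lambda$ themselves fixed and makes the inverse map transparent). Note also that your explicit two-sided inverse on topological generators already settles bijectivity, so the PBW ``technical point'' you flag at the end is not actually an obstacle --- the PBW basis for the deformed algebra follows by transport of structure, which is exactly how the paper derives its Corollary from this Proposition.
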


\begin{proof}
Let $\phi : U_{\hbar_1,\hbar_2,\hbar_3}(\G^{\alpha,\beta,\gamma})\to U_{\hbar_1,\hbar_2,\hbar_3}(\G^{\alpha,\beta,\gamma})$ be the algebra automorphism such that
\begin{eqnarray*}
&\phi(\Theta)=\Theta/\lambda, ~~ \phi(\Phi)=\Phi/\lambda, ~~ \phi(\Psi)=\Psi/\lambda, ~~ \phi(Q_i)=Q_i/\lambda, ~~ \phi(P_i)=P_i/\lambda, ~~ (i=1,2).
\end{eqnarray*}
Since
\begin{eqnarray*}
&[\phi(Q_i),\phi(P_j)]=\frac{\delta_{ij}\lambda}{\alpha\lambda^2}\Theta=\frac{\delta_{ij}}{\alpha}\phi(\Theta), \qquad (i,j=1,2),\\
&[\phi(Q_1),\phi(Q_2)]=\frac{\beta\lambda}{\alpha^2\lambda^2}\Phi=\frac{\beta}{\alpha^2}\phi(\Phi), \quad [\phi(P_1),\phi(P_2)]=\frac{\gamma\lambda}{\alpha^2\lambda^2}\Psi=\frac{\gamma}{\alpha^2}\phi(\Psi),
\end{eqnarray*}
we readily see that $\phi$ is well-defined. However, the above equalities are obviously the defining relations of $\G^{\alpha,\beta,\gamma}$, hence $U_{\hbar_1,\hbar_2,\hbar_3}(\G^{\alpha,\beta,\gamma})\cong U(\G^{\alpha,\beta,\gamma})[[\hbar_1,\hbar_2,\hbar_3]]$ as algebras.
\end{proof}

\begin{Cor}\label{pbw-basis}
The set $\left\{ \Theta^{e_1}\Phi^{e_2}\Psi^{e_3}Q_1^{e_4}Q_2^{e_5}P_1^{e_6}P_2^{e_7}~|~e_i\in\mathbb{N},\;i=1,\cdots,7 \right\}$ forms a topological basis of the free $\mathbb{C}[[\hbar_1,\hbar_2,\hbar_3]]$-module $U_{\hbar_1,\hbar_2,\hbar_3}(\G^{\alpha,\beta,\gamma})$. Here $\mathbb{N}$ is the set consisting of nonnegative integers.
\end{Cor}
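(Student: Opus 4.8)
The plan is to reduce the statement to the classical Poincar\'e--Birkhoff--Witt theorem by exploiting the algebra isomorphism $U_{\hbar_1,\hbar_2,\hbar_3}(\G^{\alpha,\beta,\gamma})\cong U(\G^{\alpha,\beta,\gamma})[[\hbar_1,\hbar_2,\hbar_3]]$ just established, and then to absorb the resulting powers of $\lambda$ by a completeness (``topological Nakayama'') argument. Throughout write $U=U_{\hbar_1,\hbar_2,\hbar_3}(\G^{\alpha,\beta,\gamma})$, $\mathfrak m=(\hbar_1,\hbar_2,\hbar_3)\subset\mathbb C[[\hbar_1,\hbar_2,\hbar_3]]$, and for $\mathbf e=(e_1,\dots,e_7)\in\mathbb N^7$ put $|\mathbf e|=e_1+\dots+e_7$ and $m_{\mathbf e}=\Theta^{e_1}\Phi^{e_2}\Psi^{e_3}Q_1^{e_4}Q_2^{e_5}P_1^{e_6}P_2^{e_7}$; the goal is that every element of $U$ is uniquely an $\mathfrak m$-adically convergent sum $\sum_{\mathbf e}c_{\mathbf e}m_{\mathbf e}$ with $c_{\mathbf e}\in\mathbb C[[\hbar_1,\hbar_2,\hbar_3]]$.

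First I would record the classical input. Since $\G^{\alpha,\beta,\gamma}$ is a finite-dimensional Lie algebra with ordered basis $(\Theta,\Phi,\Psi,Q_1,Q_2,P_1,P_2)$, the ordinary PBW theorem says the ordered monomials form a $\mathbb C$-basis of $U(\G^{\alpha,\beta,\gamma})$; after extending scalars and completing $\mathfrak m$-adically they form a topological $\mathbb C[[\hbar_1,\hbar_2,\hbar_3]]$-basis of the trivial deformation $U(\G^{\alpha,\beta,\gamma})[[\hbar_1,\hbar_2,\hbar_3]]$. The isomorphism of the preceding Proposition is $\mathbb C[[\hbar_1,\hbar_2,\hbar_3]]$-linear, hence respects the $\mathfrak m$-adic filtration and is a homeomorphism; it sends the classical generator $X$ to $X/\lambda$, and since $\lambda$ is central in $U$ it therefore sends the classical PBW monomial $\Theta^{e_1}\cdots P_2^{e_7}$ to $(\Theta/\lambda)^{e_1}\cdots(P_2/\lambda)^{e_7}=\lambda^{-|\mathbf e|}m_{\mathbf e}$. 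Consequently the family $\{\lambda^{-|\mathbf e|}m_{\mathbf e}\}_{\mathbf e\in\mathbb N^7}$ is already a topological basis of the free $\mathbb C[[\hbar_1,\hbar_2,\hbar_3]]$-module $U$ (in particular this reproves freeness).

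It then remains to pass from $\{\lambda^{-|\mathbf e|}m_{\mathbf e}\}$ to $\{m_{\mathbf e}\}$. The point is that $\lambda$ is central and $\lambda\equiv 1\pmod{\mathfrak m}$ (as noted in the proof of the first Proposition), so $\lambda$ is a central unit in the $\mathfrak m$-adic topology. I would consider the $\mathbb C[[\hbar_1,\hbar_2,\hbar_3]]$-linear endomorphism $T$ of $U$ determined on the above basis by $\lambda^{-|\mathbf e|}m_{\mathbf e}\mapsto m_{\mathbf e}$ (equivalently, the ``diagonal'' map multiplying the $\mathbf e$-th basis vector by the central unit $\lambda^{|\mathbf e|}$); this is well defined because the $\lambda^{-|\mathbf e|}m_{\mathbf e}$ form a basis, and the sums $T(\sum c_{\mathbf e}\lambda^{-|\mathbf e|}m_{\mathbf e})=\sum c_{\mathbf e}m_{\mathbf e}$ converge since $c_{\mathbf e}\to 0$ forces $c_{\mathbf e}m_{\mathbf e}\to 0$. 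Reducing modulo $\mathfrak m$ gives $\overline T=\mathrm{id}$ because $\lambda\equiv 1$; since $U$ is $\mathfrak m$-adically complete and separated (being a formal-power-series module by the preceding Proposition), the Neumann-series argument — $T=\mathrm{id}+(T-\mathrm{id})$ with $T-\mathrm{id}$ raising $\mathfrak m$-adic order — shows $T$ is invertible with $\mathbb C[[\hbar_1,\hbar_2,\hbar_3]]$-linear (hence continuous) inverse. An invertible $\mathbb C[[\hbar_1,\hbar_2,\hbar_3]]$-linear homeomorphism carries a topological basis to a topological basis, so $\{m_{\mathbf e}\}=\{T(\lambda^{-|\mathbf e|}m_{\mathbf e})\}$ is a topological basis of $U$, which is the claim.

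The step requiring the most care is the passage ``$\overline T=\mathrm{id}$ modulo $\mathfrak m$ $\Rightarrow$ $T$ is an isomorphism'': it genuinely uses completeness, separatedness and flatness of $U$ over $\mathbb C[[\hbar_1,\hbar_2,\hbar_3]]$ — exactly the content of the preceding Proposition — and it uses that, although the index set $\mathbb N^7$ is infinite, it enters only through coefficient sequences tending $\mathfrak m$-adically to $0$, so that every relevant sum converges. I would present this as a small topological Nakayama lemma; phrased that way it also gives the result directly, since any family of elements of $U$ whose residues modulo $\mathfrak m$ form a $\mathbb C$-basis of $U/\mathfrak m U\cong U(\G^{\alpha,\beta,\gamma})$ is automatically a topological basis, and the residues of the $m_{\mathbf e}$ are precisely the classical PBW monomials by the first Proposition.
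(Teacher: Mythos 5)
Your argument is correct and is exactly the route the paper intends: its one-line proof ``this follows from the PBW theorem and the invertibility of $\lambda$'' is precisely your combination of the classical PBW basis, the isomorphism $X\mapsto X/\lambda$ from the preceding Proposition, and the fact that the central element $\lambda\equiv 1\pmod{(\hbar_1,\hbar_2,\hbar_3)}$ is a unit. Your write-up simply supplies the completeness/topological-Nakayama details that the paper leaves implicit.
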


\begin{proof}
This follows from the PBW theorem and the invertibility of $\lambda$.
\end{proof}

Now let us have a look at the Lie bialgebra structure on $\G^{\alpha,\beta,\gamma}$ associated with the quantization $U_{\hbar_1,\hbar_2,\hbar_3}(\G^{\alpha,\beta,\gamma})$. Denote $\Delta^o=\sigma\circ\Delta$, where $\sigma$ is the flip operator, namely, $\sigma(x\otimes y)=y\otimes x$. In $U_{\hbar_1,0,0}(\G^{\alpha,\beta,\gamma})$, we have
\begin{align*}
\frac{\Delta(\Theta)-\Delta^o(\Theta)}{\hbar_1}&\equiv\Theta\otimes\left(\frac{e^{2\hbar_1\Theta}-e^{-2\hbar_1\Theta}}{\hbar_1}\right) - \left(\frac{e^{2\hbar_1\Theta}-e^{-2\hbar_1\Theta}}{\hbar_1}\right)\otimes \Theta \\
&\equiv \Theta\otimes 4\Theta - 4\Theta\otimes\Theta \equiv 0 \quad (\mathrm{mod} \; \hbar_1).
\end{align*}
In $U_{0,\hbar_2,0}(\G^{\alpha,\beta,\gamma})$, we have
\begin{align*}
\frac{\Delta(\Theta)-\Delta^o(\Theta)}{\hbar_2}&\equiv\Theta\otimes\left(\frac{e^{2\hbar_2 \Phi}-e^{-2\hbar_2 \Phi}}{\hbar_2}\right) - \left(\frac{e^{2\hbar_2 \Phi}-e^{-2\hbar_2 \Phi}}{\hbar_2}\right)\otimes\Theta \\
&\equiv \Theta\otimes 4\Phi - 4\Phi\otimes\Theta \equiv 4\Theta\wedge\Phi \quad (\mathrm{mod} \; \hbar_2).
\end{align*}
And in $U_{0,0,\hbar_3}(\G^{\alpha,\beta,\gamma})$, we get
\begin{align*}
\frac{\Delta(\Theta)-\Delta^o(\Theta)}{\hbar_3}&\equiv\Theta\otimes\left(\frac{e^{2\hbar_3 \Psi}-e^{-2\hbar_3 \Psi}}{\hbar_3}\right) - \left(\frac{e^{2\hbar_3 \Psi}-e^{-2\hbar_3 \Psi}}{\hbar_3}\right)\otimes\Theta \\
&\equiv \Theta\otimes 4\Psi - 4\Psi\otimes\Theta \equiv 4\Theta\wedge\Psi \quad (\mathrm{mod} \; \hbar_3).
\end{align*}
This implies that the cocommutator $\delta$ of $\G^{\alpha,\beta,\gamma}$ satifies $$\delta(\Theta)=2\Theta\wedge(a \Theta+b \Phi+c \Psi)$$
for some fixed $a,b,c\in\mathbb{C}$. By a similar computation, we get
\begin{eqnarray*}
&\delta(\Phi)=2\Phi\wedge(a\Theta+b\Phi+c\Psi), \qquad \delta(\Psi)=2\Psi\wedge(a\Theta+b\Phi+c\Psi), \\
&\delta(Q_i)=Q_i\wedge(a\Theta+b\Phi+c\Psi), \quad \delta(P_i)=P_i\wedge(a\Theta+b\Phi+c\Psi), \quad (i=1,2).
\end{eqnarray*}
Clearly this is not a coboundary Lie bialgebra structure when $a,b,c$ are not all zero, since $\Theta$, $\Phi$ and $\Psi$ are central elements.

\section{Star-product on $C^\infty(\g^{\alpha,\beta,\gamma})$}\label{sec:star-prod}
In this section, we turn to the dual perspective, namely, working out the $*$-product on $C^\infty(\g^{\alpha,\beta,\gamma})$ (the set of smooth functions on $\g^{\alpha,\beta,\gamma}$). Being endowed with the $*$-product and the original coproduct, the set $C^\infty(\g^{\alpha,\beta,\gamma})$ turns out to be a non-commutative and non-cocommutative Hopf algebra, which is denoted by $C_{\hbar_1,\hbar_2,\hbar_3}^\infty(\g^{\alpha,\beta,\gamma})$. To simplify the problem, we only consider in this paper a local counterpart of $C_{\hbar_1,\hbar_2,\hbar_3}^\infty(\g^{\alpha,\beta,\gamma})$, namely, $U_{\hbar_1,\hbar_2,\hbar_3}(\G^{\alpha,\beta,\gamma})^\ast$, the dual Hopf algebra of $U_{\hbar_1,\hbar_2,\hbar_3}(\G^{\alpha,\beta,\gamma})$.

For convenience, let us introduce some useful notations. Suppose $I=(i_1,\cdots,i_s)$ is an $s$-tuple, then $i_1+i_2+\cdots+i_s$ is denoted by $|I|$, and the expression $i_1!i_2!\cdots i_s!$ is abrreviated as $I!$. If $J=(j_1,\cdots,j_s)$ is another tuple, we also abrreviate $\binom{i_1}{j_1}\binom{i_2}{j_2}\cdots\binom{i_s}{j_s}$ to $\binom{I}{J}$, where $\binom{i}{j}=\frac{i!}{j!(i-j)!}$ are binomial coefficients. Note that $\frac{|I|!}{I!}$ is denoted by $\binom{|I|}{I}$, $\frac{(\lambda\Theta)^{i_1}(\lambda\Phi)^{i_2}(\lambda\Psi)^{i_3}}{I!}$ is abrreviated to $Z^I$, $\frac{Q_1^{i_1}Q_2^{i_2}P_1^{i_3}P_2^{i_4}}{I!}$ is abrreviated to $X^I$, and $\hbar_1^{i_1}\hbar_2^{i_2}\hbar_3^{i_3}$ is shortened as $H^I$. Finally, given two $s$-tuples $I$ and $J$, we can form their linear combination $aI+bJ=(ai_1+bj_1,ai_2+bj_2,\cdots,ai_s+bj_s)$ for any $a,b\in\mathbb{Z}$. 

It can be readily deduced from the Corollary \ref{pbw-basis} that $\{Z^I X^J~|~I\in\mathbb{N}^3, J\in\mathbb{N}^4\}$ is a topological basis for the free $\mathbb{C}[[\hbar_1,\hbar_2,\hbar_3]]$-module $U_{\hbar_1,\hbar_2,\hbar_3}(\G^{\alpha,\beta,\gamma})$. Let $\{W^K Y^L\}\in U_{\hbar_1,\hbar_2,\hbar_3}(\G^{\alpha,\beta,\gamma})^\ast$ be the dual basis, i.e., $\langle W^K Y^L, Z^I X^J\rangle=\delta_{KI}\delta_{LJ}$. Then we only need to compute the $*$-product $W^I Y^J * W^K Y^L$ by using the identity
$$\langle W^I Y^J * W^K Y^L, Z^S X^T \rangle=\langle W^I Y^J\otimes W^K Y^L, \Delta(Z^S X^T) \rangle.$$
In fact, we have
\begin{align*}
\Delta(Z^S X^T)&=\sum_{0\le I\le S,\: 0\le J\le T}Z^I X^J e^{-(2|S-I|+|T-J|)\rho}\otimes Z^{S-I} X^{T-J} e^{(2|I|+|J|)\rho} \\
&=\sum_{0\le I\le S,\: 0\le J\le T,\: M\ge 0,\: N\ge 0} \binom{I+M}{M}\binom{S-I+N}{N} H^{M+N}\times \\
& \quad \left(-2|S-I|-|T-J|\right)^{|M|} \left(2|I|+|J|\right)^{|N|} Z^{I+M} X^J\otimes Z^{S-I+N} X^{T-J}  \\
&=\sum_{0\le J\le T,\: 0\le M\le I,\: 0\le N\le K,\: (I-M)+(K-N)=S} \binom{I}{M}\binom{K}{N} H^{M+N} \times \\
& \quad \left(-2|K-N|-|T-J|\right)^{|M|} \left(2|I-M|+|J|\right)^{|N|} Z^I X^J\otimes Z^K X^{T-J}.
\end{align*}
It follows that
\begin{eqnarray*}
& W^I Y^J * W^K Y^L = \sum\limits_{0\le M\le I,\: 0\le N\le K} H^{M+N}W^{I+K-M-N} Y^{J+L} \times \\
& \quad \binom{I}{M}\binom{K}{N} \left(-2|K-N|-|L|\right)^{|M|} \left(2|I-M|+|J|\right)^{|N|}.
\end{eqnarray*}
As a formal power series in $\hbar_1$, $\hbar_2$ and $\hbar_3$, the constant term of $W^I Y^J * W^K Y^L$ is clearly $W^{I+K} Y^{J+L}$, thus the classical limit of the $*$-product is indeed commutative. Moreover, the linear term of $W^I Y^J * W^K Y^L$ is
\begin{eqnarray*}
&\hbar_1\left[k_1(2|I|+|J|)-i_1(2|K|+|L|)\right] W^{I+K-(1,0,0)} Y^{J+L} \\
+&\hbar_2\left[k_2(2|I|+|J|)-i_2(2|K|+|L|)\right] W^{I+K-(0,1,0)} Y^{J+L} \\
+&\; \hbar_3\left[k_3(2|I|+|J|)-i_3(2|K|+|L|)\right] W^{I+K-(0,0,1)} Y^{J+L},
\end{eqnarray*}
hence the Poisson bracket $\{W^I Y^J , W^K Y^L\}$ associated with the classical limit of the $*$-product is
\begin{eqnarray*}
&a \left[k_1(2|I|+|J|)-i_1(2|K|+|L|)\right] W^{I+K-(1,0,0)} Y^{J+L} \\
+&b \left[k_2(2|I|+|J|)-i_2(2|K|+|L|)\right] W^{I+K-(0,1,0)} Y^{J+L} \\
+&c \left[k_3(2|I|+|J|)-i_3(2|K|+|L|)\right] W^{I+K-(0,0,1)} Y^{J+L}
\end{eqnarray*}
for some fixed $a,b,c\in\mathbb{C}$. In particular, if we denote
\begin{eqnarray*}
& W^{(1,0,0)}=\chi_1, \qquad W^{(0,1,0)}=\chi_2, \qquad W^{(0,0,1)}=\chi_3, \\
& Y^{(1,0,0,0)}=\chi_4, \quad Y^{(0,1,0,0)}=\chi_5, \quad Y^{(0,0,1,0)}=\chi_6, \quad Y^{(0,0,0,1)}=\chi_7, 
\end{eqnarray*}
then we get
\begin{eqnarray*}
&\{\chi_1,\chi_2\}=2(b\chi_1-a\chi_2), \qquad \{\chi_1,\chi_3\}=2(c\chi_1-a\chi_3), \qquad \{\chi_2,\chi_3\}=2(c\chi_2-b\chi_3), \\
&\{\chi_i,\chi_1\}=a \chi_i, \quad \{\chi_i,\chi_2\}=b \chi_i, \quad \{\chi_i,\chi_3\}=c \chi_i, \quad \{\chi_i,\chi_j\}=0, \qquad (4\le i,j\le 7).
\end{eqnarray*}
This is a linear Poisson structure, thus defines a seven-dimensional Lie algebra, which is, as expected, the Lie algebra structure on $(\G^{\alpha,\beta,\gamma})^\ast$ corresponding to the cocommutator $\delta$.

\section{Conclusion}\label{sec:conclusions}
In this paper we studied the deformation of noncommutative quantum mechanics (NCQM) in two dimensions by quantizing its kinematical symmetry group $\g$ employing techniques of quantum groups. We want to re-emphasize in this context that the authors in \cite{Diasetal} outlined a deformation quantization scheme by introducing new $*$-products between elements of the Schwartz space of test functions and its dual on $\mathbb{R}^{2n}$.

To this end, they suitably deformed the standard symplectic matrix $J$ to obtain the skew-symmetric matrix $\Omega$ which approaches $J$ as $\hbar\rightarrow 0$. In \cite{wigjmp}, on the other hand, a three-parameter family of $*$-products, between Wigner functions of $\g$ supported on its various coadjoint orbits, was introduced in terms of which properties of noncommutative marginal distributions in position and momentum coordinates as obtained in \cite{bastoscmp} were also verified to hold. The $*$-products and the relevant Poisson structures that we have obtained in this paper are drastically of different nature as they are the deformed products for the algebra of functions on $\g$, or more generally on $\g^{\alpha,\beta,\gamma}$.

Also it is important to note that not all of the algebras in the family $\G^{\alpha,\beta,\gamma}$ are non-isomorphic. We have partial results on this classification problem which we propose to study in greater depth in a future publication.

\section*{Acknowledgements}
The authors would like to thank Professor Chengming Bai for great encouragement and Professor Syed Twareque Ali for beneficial discussions.

One of the authors (SHHC) gratefully acknowledges a grant from National Natural Science Foundation of China (NSFC) under Grant No. 11550110186.


\begin{thebibliography}{99}

\bibitem{BaMu} A. Ballesteros and F. Musso,
\newblock {\em Quantum Algebras as Quantizations of Dual Poisson-Lie Groups.}
\newblock {J. Phys. A: Math. Theor.}, {\bf 46}, 195203 (2013)


\bibitem{bastosjmp} C. Bastos, O. Bertolami, N. C. Dias, J. N. Prata,
\newblock {\em Weyl-Wigner Formulation of Noncommutative Quantum Mechanics.}
\newblock {J. Math. Phys.}, {\bf 49}, 072101 (2008) 


\bibitem{bastoscmp} C. Bastos, N. C. Dias and J. N. Prata,
\newblock {\em Wigner Measures in Noncommutative Quantum Mechanics.}
\newblock {Comm. Math. Phys.}, {\bf 299}, 709 (2010)


\bibitem{Bonechietal} F. Bonechi, R. Giachetti, E. Sorace, M. Tarlini,
\newblock {\em Deformation Quantization of the Heisenberg Group.}
\newblock {Comm. Math. Phys.}, {\bf 169}, 627 (1995) 


\bibitem{CeGiSoTa} E. Celeghini, R. Giachetti, E. Sorace, and M. Tarlini,
\newblock {\em The Quantum Heisenberg Group $H(1)_q$.}
\newblock {J. Math. Phys.}, {\bf 32}, 1155 (1991)


\bibitem{ChPr} V. Chari and A. N. Pressley, 
\newblock {\em A Guide to Quantum Groups.}
\newblock {Cambridge University Press}, 1995


\bibitem{ncqmjmp} S. H. H. Chowdhury and S. T. Ali,
\newblock {\em The Symmetry Groups of Noncommutative Quantum Mechanics and Coherent State Quantization.}
\newblock {J. Math. Phys.}, {\bf 54}, 032101 (2013)


\bibitem{ncqmjpa} S. H. H. Chowdhury and S. T. Ali,
\newblock {\em Triply Extended Group of Translations of $\mathbb{R}^{4}$ as Defining Group of NCQM: relation to various gauges.}
\newblock {J. Phys. A: Math. Theor.}, {\bf 47}, 085301 (29pp) (2014)


\bibitem{wigjmp} S. H. H. Chowdhury and S. T. Ali,
\newblock {\em Wigner Functions for Noncommutative Quantum Mechanics: a group representation based construction.}
\newblock {J. Math. Phys.}, {\bf 56}, 122102 (2015)


\bibitem{Diasetal} N. C. Dias, M. de Gosson, F. Luef, J. N. Prata,
\newblock {\em A Deformation Quantization Theory for Non-Commutative Quantum Mechanics.}
\newblock {J. Math. Phys.}, {\bf 51}, 072101 (2010) 


\bibitem{Doplicheretal} S. Doplicher, K. Fredenhagen and J. E. Roberts,
\newblock {\em The Quantum Structure of Spacetime at the Planck Scale and Quantum Fields.}
\newblock {Comm. Math. Phys.}, {\bf 172}, 187 (1995) 


\bibitem{Hussinetal} V. Hussin, A. Lauzon and G. Rideau,
\newblock {\em R-matrix Method for Heisenberg Quantum Groups.}
\newblock {Lett. Math. Phys.}, {\bf 39}, 159 (1994)


\bibitem{Sny} H. S. Snyder,
\newblock {\em Quantized Space-Time.}
\newblock {Phys. Rev.}, {\bf 71}, 38 (1947)


\bibitem{wig} E. P. Wigner,
\newblock {\em On the Quantum Correction for Thermodynamic Equilibrium.}
\newblock {Phys. Rev.}, {\bf 40}, 749--759 (1932)

\end{thebibliography}

\end{document}